\newtheorem{theorem}{Theorem}[section]
\newtheorem{corollary}{Corollary}[theorem]
\newtheorem{lemma}[theorem]{Lemma}
\newtheorem{proposition}{Proposition}
\begin{document}
\title{Performance Analysis of Inband FD-D2D Communications with Imperfect SI Cancellation for Wireless Video Distribution}
\author{\IEEEauthorblockN{Mansour Naslcheraghi${{}^{1,*}}$, \textit{Member, IEEE} \\ Seyed Ali Ghorashi${{}^{1,2}}$, \textit{Senior Member, IEEE}, Mohammad Shikh-Bahaei${{}^{3}}$, \textit{Senior Member, IEEE}}
\IEEEauthorblockA{1. Department of Electrical Eng., Shahid Beheshti University, G.C. Tehran, Iran\\
2. Cyberspace Research Institute, Shahid Beheshti University, G.C. Tehran, Iran\\
3. Centre for Telecommunications Research, King's College London, UK\\
\url{m.naslcheraghi@mail.sbu.ac.ir}, \url{a_ghorashi@sbu.ac.ir}, \url{m.sbahaei@kcl.ac.uk}
}}
\maketitle
\begin{abstract}
 Tremendous growing demand for high data rate services is the main driver for increasing traffic in wireless cellular networks. Device-to-Device (D2D) communications have recently been proposed to offload data via direct communications by bypassing cellular base stations (BSs). Such an offloading schemes increase capacity and reduce end-to-end delay in cellular networks and help to serve the dramatically increasing demand for high data rate. In this paper, we aim to analyze inband full-duplex (FD) D2D performance for the wireless video distribution by considering imperfect self-interference (SI) cancellation. Using tools from stochastic geometry, we analyze outage probability and spectral efficiency. Analytic and simulation results are used to demonstrate achievable gain against its half-duplex (HD) counterpart.
\end{abstract}

\begin{IEEEkeywords}
stochastic geometry, D2D, outage probability, full-duplex, video distribution.
\end{IEEEkeywords}

\IEEEpeerreviewmaketitle
\section{introduction}
Increasing demand for high data rate and live video streaming in cellular networks has attracted researchers' attention to cache-enabled cellular network architectures \cite{Negin_Mag}. These networks exploit D2D communications as a promising technology of 5G heterogeneous networks, for cellular video distribution. In a cellular content delivery network assisted by D2D communications, user devices can capture their desired contents, either via cellular infrastructure or via D2D links from other devices in their vicinity. Recently, several studies in both content placement policies and delivery strategies are conducted to minimize the downloading time, and to maximize the overall network throughput in terms of rate and area spectral efficiency. From the content placement perspective of view, contents can be placed on collaborative nodes (user devices) formerly, either according to a predefined caching policy (reactive caching) \cite{Reactive4}, or more intelligently, according to statistics of user devices' interest (proactive caching) \cite{Proactive1}. Higher layer protocols for resource allocation and scheduling have been studied for LTE and WiFi networks with half duplex transceivers \cite{Mohammad1,Mohammad2,Mohammad3,Mohammad4}. In contrast with device-centric caching policy, cluster-centric caching policy is proposed in \cite{Afshang1} to maximize area spectral efficiency. The main idea in \cite{Afshang1} is to place the content in each cluster in order to maximize the collective performance of all devices. Most of conventional architectures of wireless video distribution schemes in both wireless cellular and D2D opportunities, are based on half-duplex (HD) transmission and, recently full-duplex (FD) capability and its advantages in wireless video distribution with D2D opportunities has been investigated in \cite{MyIET}. It is shown that FD capability can promise near double spectral efficiency, providing self-interference (SI) is entirely canceled. Recent advances in FD radio design \cite{full-duplex-advances1}, materialized by advanced signal processing
techniques that can suppress SI at the receiver, have enabled simultaneous transmission and reception over the same frequency band, which is called inband FD communication. Our main contribution in this paper is analyzing the performance of the FD-enabled D2D opportunities in a cache-enabled cellular network, by employing stochastic geometry that captures physical layer characteristics of an ultra dense network. Details along with the main contributions are as follows.
\begin{itemize}
	\item We propose new system model based on Poisson point process (PPP) to characterize the performance of the system, in terms of closed form expressions of the outage probability and per link spectral efficiency.  
	\item Despite \cite{MyIET} in which perfect SI cancellation considered, in this paper, we consider imperfect SI cancellation that have major impact on FD radios performance.
	\item Despite \cite{MyICT2017} in which HD/FD collaboration probabilities are derived in a uniformly distributed network, in this paper, we provide new analysis for HD/FD collaboration probabilities that captures the extremely random nature of a stochastic network.  
\end{itemize}
Extensive simulations are used to verify analytic results, to prove achievable gain of FD-enabled opportunities in an ultra dense network against its HD counterpart. The remainder of this paper is organized as follows. In section \ref{System Model Section}, system model is clarified. Section \ref{Collaboration probability section} provides the analysis for HD/FD collaboration probabilities. Section \ref{Performance Analysis Section} provides performance analysis of the system. Section \ref{Results and Discussions Section} provides analytic and simulation results, and finally section \ref{Conclusion Section} concludes the paper. 

\section{System Model}
\label{System Model Section}
\subsection{Network and Channel Model}
We consider an in-band overlay spectrum access strategy in which part of cellular spectrum allocated for D2D communications, and therefore there is no interference between D2D and cellular communications \cite{D2D_survey}. All user devices within cell area have capability of operating in FD mode and are under full control of the BS. We also assume that self-interference (SI) cancellation allows the FD radios to transmit and receive simultaneously over the same time/frequency. Both analog and digital SI cancellation methods can be used to partially cancel the SI. However, in practice, it is difficult or even impossible to cancel the SI, perfectly. The SI in FD nodes is assumed to be canceled imperfectly with residual self-interference-to-power ratio $\beta$, where $0 \le \beta \le 1$. The parameter $\beta$ denotes the amount of SI cancellation. When  $\beta  = 0$, there is perfect SI cancellation, and when $\beta=1$, there is no SI cancellation.

A Poisson point process (PPP) $\Psi$ is considered to model the UEs with the intensity of $\lambda$. For D2D links, standard power-law path-loss model is considered in which the signal power decays at the rate of $r^{-\alpha}$, where $\alpha > 2$ is the path-loss exponent. For FD-D2D links, the channel between FD cooperative nodes are reciprocal and hence the path-loss exponent for both links in FD-D2D mode is the same. Independent Rayleigh fading with unit mean is assumed between any D2D pair. D2D transmitters are transmitting with fixed power $\rho_d$.

\subsection{Caching Model}
\label{Caching Model}
Denote the set of popular video contents as $\bm{V} = \{ {v_1},{v_2},...,{v_m}\}$ with size $m$. Each user has a unique identity number $\omega$. We use Zipf distribution for modeling the popularity of video contents \cite{Zipf} and thus, the popularity of the cached video content $v_\omega$ cached in user $u_i$, is inversely proportional to its rank:

\begin{equation}
\label{Zipf formula}
{f_{i\omega}}(\gamma_r,m) = \left( {i^{ {\gamma _r}}} \sum_{j=1}^{m}j^{-\gamma_r}\right)^{-1} ,\begin{array}{*{20}{c}}
{}&{1 \le i \le m}
\end{array}.
\end{equation}
Zipf exponent $\gamma _r$ is skew exponent that characterizes the distribution by controlling the popularity of contents for a given library size, i.e., $m$. 
Assuming each user have a considerable storage to cache video contents, we provide $optimal$ caching policy for content placement strategy in UEs' storage. 
In this mechanism, contents are placed in users' caches in advance in which each user with a considerable storage capacity can cache a subset of contents $\bm{F_{\ell} \subset V}$ from the library, i.e., $\bm{F_{\ell}} = \{ {f_{\ell1}},{f_{\ell2}},...,{f_{\ell \mathfrak{X}}}\}$, $\mathfrak{X} \le m$ and there is no overlap between users' caches, i.e., ${\bm{F}_p}\mathop  \cap \limits_{p \ne q} {\bm{F}_q} = \phi $. Assuming each user has an identity number, in this mechanism, $h$ top cached at the first user ($\ell=1$), $h$ second cached at the second user ($\ell=2$), etc. 

We assume that a fraction of users with intensity $\mu\lambda$ are making request at the same time. We call $\mu \to 0$ as \textit{Low Load} scenario which means few users making request and call $\mu \to 1$ as \textit{High Load} scenario which means all users demand content at the same time. Each user randomly requests a content from the library randomly, and according to Zipf distribution with exponent $\gamma_r$. Although, the signaling mechanisms do not affect our analysis in this work, however, efficient D2D discovery mechanisms can be adopted to discover D2D pairs in an underlay cellular scenario \cite{MyDiscoveryPaper}. 
According to the information of caches and UEs' requests, from D2D node establishment point of view, there are two different possible configurations for a D2D collaborating node; (i) half-duplex D2D (HD-D2D) mode with probability of $\mathcal{P}_{\textup{HD}}$, in which an arbitrary user node within network can transmit a demanding video content to its respective receiver and (ii) full-duplex D2D (FD-D2D) mode with probability of $\mathcal{P}_{\textup{FD}}$, in which any arbitrary node can operate in FD mode to receive its desired content and simultaneously serve for demanding video content at same time/frequency band (inband FD-D2D). According to Slivnyak's theorem \cite{HaenggiBook}, the statistics observed at any arbitrary random point of PPP is the same as those considered at the origin. Hence, in the analysis sections, we consider a typical receiver at the origin and provide analysis for the typical user. According to this, Since the receivers and their density do not affect our main analysis, the cases of interests, namely, $\mathcal{P}_{\textup{HD}}$ and $\mathcal{P}_{\textup{FD}}$, will be derived just by considering transmitters' intensities across the entire network. 
Now, according to \textit{Thinning Theorem} \cite{HaenggiBook}, the caching nodes storing video contents can be modeled as the independent PPP in two different cases as follows: 
\begin{itemize}
	\item An independent PPP $\Phi_{\textup{HD}}$ with intensity $\mu \mathcal{P}_{\textup{HD}}\lambda$. In this case, all transmitting nodes are operating in HD mode. 
	
	\item An independent PPP $\Phi_{\textup{FD}}$ with intensity $\mu \mathcal{P}_{\textup{FD}}\lambda$. In this case, all transmitting nodes are operating in FD mode.
\end{itemize}

\begin{table}[htb]
	\centering
	\caption {Summary of notation} 
	\centering
	\begin{tabular}{|c|l|}
		\hline
		Notation & Description\\
		\hline \hline
		$\bm{V}$ & Video Content Library\\
		\hline
		$v_i$ & $i$th video content in library\\
		\hline
		$m$ & Library size \\
		\hline
		$\gamma_r$ & Requesting Zipf exponent \\
		\hline
		$\bm{F_\ell}$ & Cached contents in user $u_\ell$\\
		\hline
		$\mathfrak{X}$ & Number of cached contents per user\\
		\hline
		$f_{\ell i}$ & $\ell$th file cached in $i$th user\\
		\hline
		$\mu$ & Fraction of the users ($0 \le \mu \le 1$) \\
		\hline
		$\delta$ & D2D operation mode; $\delta \in \{\textup{HD},\textup{FD}\}$ \\
		\hline
		$\Upsilon^{\delta}$ & SINR in receiver of interest by D2D operation mode $\delta$\\
		\hline
		$\rho_d$ & D2D transmit power\\
		\hline
		$\rho_{r}$ & Received power\\
		\hline
		$\mathcal{I}_{d,\delta}$ & Interference from DUEs in mode $\delta$ on the receiver of interest\\
		\hline
		$\beta$ & SI cancellation factor\\
		\hline
		$\alpha$ & path-loss exponent for D2D links\\
		\hline
		$\lambda$ & UEs density\\
		\hline
		$\lambda_{a}$ & Interfering CUEs density\\
		\hline
		$\bm{\Phi}$ & PPP constituted by the UEs\\
		\hline
		$\bm{\Psi}$ & PPP constituted by the macro BSs\\
		\hline
		$h$ & Rayleigh fading channel gain \\
		\hline
		$\sigma^2$ & Noise power\\
		\hline
		$\mathbbm{1}_{\{.\}}$ & Indicator function \\
		\hline
		$\mathcal{R}_d$ & Distance from typical receiver to its transmitter\\
		\hline
		$\theta_d$ & SINR threshold for D2D communication\\
		\hline
	\end{tabular}
	\label{notations}
\end{table}

\section{Analyzing HD/FD Collaboration Probabilities}
\label{Collaboration probability section}

Denoting $\mathcal{P}(i)$ as the popularity of the cached contents at $u_i$, according to the described caching policy in \ref{Caching Model}, we have
\begin{equation}
\label{popularity opt u_i}
f(i) = \sum_{\omega=1}^{\mathfrak{X}} {f_{i\omega}}(\gamma_r,m), 
\end{equation}
where ${f_{i\omega}}(\gamma_c,m)$ is as in eq  (\ref{Zipf formula}). For given $N$ users, we define $f_{\textup{hit}}$ as the hitting probability which denotes total popularity of contents cached in all users and can be defined as
\begin{equation}
\label{hitting probe}
f_{\textup{hit}} = \sum_{i=1}^{N} f(i).
\end{equation}
By using \textit{Thinning theorem} \cite{HaenggiBook}, in the following lemma, we obtain the expected number of nodes within a circle region which are considered as demanding users, and consequently can potentially operate in D2D communications. This means that the density of nodes which are considered as demanding users is $\mu \lambda$ as described in section \ref{System Model Section}.
\begin{lemma}
	\label{lemma N}
In a PPP network, the expected number of nodes denoted by $N$ within a ball of radius $R$, is $N =\mu \lambda \pi R^{2}$.
\end{lemma}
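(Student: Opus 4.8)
The plan is to combine two elementary facts about homogeneous Poisson point processes. First, I would observe that the demanding users arise from the UE process $\Psi$ of intensity $\lambda$ by retaining each point independently with probability $\mu$ --- this is precisely the ``a fraction $\mu$ of the users make a request'' assumption of Section \ref{System Model Section}. By the \emph{Thinning Theorem} \cite{HaenggiBook}, the retained points again form a homogeneous PPP, now with intensity $\mu\lambda$. This reduces the lemma to evaluating the mean number of points of a homogeneous PPP of intensity $\mu\lambda$ falling inside a ball of radius $R$, which by Slivnyak's theorem we may take centred at the typical receiver (the origin).

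Second, I would invoke the defining property of a PPP: for any bounded Borel set $B$, the point count in $B$ is Poisson distributed with mean equal to the intensity measure $\Lambda(B)$, and for a homogeneous process of intensity $\mu\lambda$ this is $\Lambda(B)=\mu\lambda\,|B|$, the intensity times the Lebesgue measure of $B$. Specializing to a disk of radius $R$ in the plane, where $|B|=\pi R^2$, gives $N = \mu\lambda\pi R^2$, which is the assertion.

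There is no genuine obstacle here; the statement is a textbook consequence of the Poisson model. The only two points worth stating explicitly are that the thinning is independent of the point positions, so the Thinning Theorem applies verbatim and the resulting intensity is exactly the product $\mu\lambda$; and that a ``ball of radius $R$'' in this planar network is a disk, so the relevant measure is its area $\pi R^2$ rather than a three-dimensional volume.
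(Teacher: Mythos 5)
Your argument is correct and matches the paper's reasoning: the paper likewise treats the number of demanding users in the region as a Poisson random variable with mean $\mu\lambda A$ (the thinned intensity times the area) and takes its expectation, with $A=\pi R^2$ for the disk. Your version merely makes the appeal to the Thinning Theorem and the Poisson count property explicit, which the paper states in the surrounding text rather than inside the proof.
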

\begin{proof}
By taking expectation of the Poisson distributed random variable $N$, where the probability density function (PDF) of the Poisson distribution is $\frac{e^{-\mu\lambda A} \left(\mu\lambda A\right)^{N}}{N!}$, we can get $\mathbb{E}[N]=\mu \lambda A$, where $A$ is the area of the region and in our case it is $A=\pi R^{2}$.
\end{proof}
Since the number of nodes is an integer value, we round $N$ toward positive infinity in lemma \ref{lemma N}. Now, by considering the expected value of $N$, we can define the following propositions for a $\delta$-D2D network, where $\delta \in \{\textup{HD},\textup{FD}\}$.
\begin{proposition}
	\label{optimal caching proposition}
	In a wireless $\delta$-D2D enabled network with optimal caching policy, described in section \ref{System Model Section}, the probability that a user operates in $\delta$ mode, denoted by $\mathcal{P}_\delta$, can be defined as 
	
	\begin{equation}
	\label{HD/FD collaboration proposition formula}
	\mathcal{P}_{\delta} =  \sum_{i=1}^{N} \mathcal{Q}_\delta(i) \Lambda(i)  {f(i)},
	\end{equation}
	where, 
	\begin{equation}
	\Lambda(i) = \sum\limits_{n= 1}^{N - 1}{\left( {\begin{array}{*{20}{c}}
			{N - 1}\\
			n
			\end{array}} \right)}{\left( {f(i)} \right)^n}{\left( {1 - f(i)} \right)^{N - n - 1}},
	\end{equation}
	\begin{equation}
	\mathcal{Q}_{\textup{HD}}(i) = 1- \left( f_{\textup{hit}} - f(i)\right),
	\end{equation}
	\begin{equation}
	\mathcal{Q}_{\textup{FD}}(i) =  f_{\textup{hit}} - f(i).
	\end{equation}
\end{proposition}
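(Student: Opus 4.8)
The plan is to fix a typical node, express the event ``this node operates in mode $\delta$'' as a product of three stochastically independent events whose probabilities are precisely the three factors in \eqref{HD/FD collaboration proposition formula}, and then sum over the mutually exclusive choices of which cache is involved. As a first step I would invoke Slivnyak's theorem \cite{HaenggiBook} to reduce to a typical demanding user, and the \textit{Thinning Theorem} \cite{HaenggiBook} to regard the demanding users as an independent PPP of intensity $\mu\lambda$; by Lemma~\ref{lemma N} a representative ball of radius $R$ then contains $N=\mu\lambda\pi R^{2}$ of them (rounded up), and I would carry out the whole computation conditionally on $N$, substituting its value only at the very end.

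Next I would isolate the factors $f(i)$ and $\Lambda(i)$. Applying the law of total probability over which user's cache $u_i$ is the one serving the link under consideration, the optimal caching policy of Section~\ref{System Model Section} together with the i.i.d.\ Zipf request model of \eqref{Zipf formula}--\eqref{popularity opt u_i} makes the content group stored at $u_i$ the requested one with probability $f(i)$; since the caches partition the popular library (${\bm F}_p\cap{\bm F}_q=\phi$ for $p\neq q$), the events indexed by $i=1,\dots,N$ are mutually exclusive, which legitimises the outer summation. Conditioned on $u_i$, each of the remaining $N-1$ demanding users independently draws its request from the same Zipf law, hence asks for content cached at $u_i$ with probability $f(i)$; the number of such requests is $\mathrm{Binomial}(N-1,f(i))$, so the probability that $u_i$ in fact has at least one D2D receiver to serve is the ``at least one success'' probability $\Lambda(i)=\sum_{n=1}^{N-1}\binom{N-1}{n}f(i)^{n}(1-f(i))^{N-1-n}$.

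It then remains to classify the duplex mode and assemble. Given that $u_i$ is transmitting, it is in FD mode exactly when it simultaneously receives its own desired content over a D2D link, which by the partition property happens iff $u_i$'s own request points to content cached at a different user --- an event of probability $\sum_{j\neq i}f(j)=f_{\textup{hit}}-f(i)=\mathcal{Q}_{\textup{FD}}(i)$ by \eqref{hitting probe} --- while its complement (the request is for content $u_i$ already holds or for an uncached content) has probability $1-(f_{\textup{hit}}-f(i))=\mathcal{Q}_{\textup{HD}}(i)$. The ``$u_i$ is requested by the surrounding users'' event depends only on the other users' requests while the ``$u_i$ must itself receive'' event depends only on $u_i$'s own request, so conditional on $N$ the three factors are independent; the contribution of index $i$ is thus $\mathcal{Q}_\delta(i)\Lambda(i)f(i)$, and summing over the disjoint $i$ and inserting $N=\mu\lambda\pi R^{2}$ from Lemma~\ref{lemma N} gives \eqref{HD/FD collaboration proposition formula}.

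I expect the decoupling step to be the main obstacle: one must verify that, after conditioning on $N$ and on the serving cache $u_i$, the three events are driven by genuinely disjoint collections of i.i.d.\ requests so that their probabilities multiply, and in particular that no single request is counted both in the $f(i)$ factor and in the $\Lambda(i)$ factor. A secondary bookkeeping point is to propagate the ``$-f(i)$'' self-exclusion consistently through both $\mathcal{Q}_\delta(i)$ and $\Lambda(i)$, and to account for the integer rounding of the Poisson count noted after Lemma~\ref{lemma N}.
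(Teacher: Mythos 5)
Your proposal is correct and takes essentially the same route as the paper: the paper likewise factorizes $\mathcal{P}_{\delta|i}=\mathcal{Q}_{\delta}(i)\Lambda(i)$ by the i.i.d.\ request assumption, obtains $\Lambda(i)$ as the at-least-one-success probability of an $n\sim\mathcal{B}(N-1,f(i))$ count of requests for $u_i$'s cache, sets $\mathcal{Q}_{\textup{FD}}(i)=f_{\textup{hit}}-f(i)$ with $\mathcal{Q}_{\textup{HD}}(i)$ its complement, and then sums over $i$ with weight $f(i)$, using $N=\mu\lambda\pi R^{2}$ from Lemma~\ref{lemma N}. The only (immaterial) difference is how the outer weight $f(i)$ is motivated: you justify it as total probability over which cache serves the typical request, whereas the paper phrases it as an expectation over the index $i$.
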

\begin{proof}
	See Appendix \ref{proof for optimal caching proposition}. 
\end{proof}
\section{Performance Analysis}
\label{Performance Analysis Section}
First, we define the following indicator function $\mathbbm{1}_\delta$ to simplify the upcoming expressions, i.e., 
$
{\mathbbm{1}_{\delta}} = \left\{ \begin{array}{l}
\begin{array}{*{20}{c}}
1&{;\delta = \textup{FD}}
\end{array}\\
\begin{array}{*{20}{c}}
0&{;\delta = \textup{HD}}
\end{array}
\end{array} \right.$

\subsection{SINR Analysis}
\label{SINR subsection}
For the $\delta$-D2D inband overlay cellular system described in section \ref{System Model Section}, the experienced signal-to-interference-plus-noise ratio (SINR), denoted by $\Upsilon^{\delta}$, at receiver of interest located at the origin,  can be defined as
\begin{equation}
\label{SINR formula}
\Upsilon ^{\delta}  = \frac{{\rho _r }}{{{\sigma ^2} + \mathcal{I}_{d,\delta }  + {\mathbbm{1}_{\delta}}.(\beta {\rho _d})}},
\end{equation}
where $\rho _r = {\rho _d}h{\mathcal{R}_d^{ - {\alpha}}}$, and 
\begin{equation}
\label{Interference expression}
\mathcal{I}_{d,\delta } = \bm{\sum}\limits_{{z_i} \in {\bm{\Phi} _\delta }\backslash \{ {z_o}\} } {{\rho _d}{h_i}{{\left\| {{z_i}} \right\|}^{ - {\alpha}}}}.
\end{equation}
\subsection{Laplace Transform of the Interferences}
In this subsection, we aim to derive Laplace transform of the interference $\mathcal{I}_{d,\delta}$ in eq. (\ref{Interference expression}).
\begin{lemma}
	\label{Laplace Transforms}
	Laplace transform of the interference in eq. (\ref{Interference expression}) can be written as
	\begin{equation}
	\label{Laplace I_d}
	{\mathcal{L}_{\mathcal{I}_{d,\delta}}}(s) = \exp \left( {\frac{- 2{\pi ^2}}{\alpha }}{ \mu \mathcal{P}_\delta \lambda {{(s{p_d})}^{\frac{2}{\alpha }}}\csc(\frac{2}{\alpha }\pi )} \right),
	\end{equation}
	where $\csc(x)=\frac{1}{\sin(x)}$.
\end{lemma}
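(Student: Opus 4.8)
The plan is to compute the Laplace transform $\mathcal{L}_{\mathcal{I}_{d,\delta}}(s) = \mathbb{E}\!\left[e^{-s\,\mathcal{I}_{d,\delta}}\right]$ directly from the definition of $\mathcal{I}_{d,\delta}$ in eq.~(\ref{Interference expression}), using the probability generating functional (PGFL) of the PPP $\bm{\Phi}_\delta$. Recall that $\bm{\Phi}_\delta$ has intensity $\mu\mathcal{P}_\delta\lambda$, and that the interfering signal from node $z_i$ is $\rho_d h_i \|z_i\|^{-\alpha}$ with $h_i$ i.i.d.\ unit-mean exponential (Rayleigh fading power). First I would write
\begin{equation}
\mathcal{L}_{\mathcal{I}_{d,\delta}}(s) = \mathbb{E}_{\bm{\Phi}_\delta,\{h_i\}}\!\left[\prod_{z_i \in \bm{\Phi}_\delta \backslash\{z_o\}} e^{-s\rho_d h_i \|z_i\|^{-\alpha}}\right],
\end{equation}
then take the expectation over the fading $h_i$ inside the product (each factor becomes $\frac{1}{1+s\rho_d\|z_i\|^{-\alpha}}$), and finally apply the PGFL of the PPP, which converts the product into $\exp\!\left(-\mu\mathcal{P}_\delta\lambda \int_{\mathbb{R}^2}\left(1-\frac{1}{1+s\rho_d\|z\|^{-\alpha}}\right)dz\right)$. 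Converting to polar coordinates gives $\exp\!\left(-2\pi\mu\mathcal{P}_\delta\lambda \int_0^\infty \frac{s\rho_d r^{-\alpha}}{1+s\rho_d r^{-\alpha}}\,r\,dr\right)$.

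The core computation is then the radial integral $\int_0^\infty \frac{r}{1+(s\rho_d)^{-1}r^{\alpha}}\,dr$. I would evaluate it by the substitution $u = r^{\alpha}/(s\rho_d)$ (equivalently $t = (1+(s\rho_d)^{-1}r^\alpha)^{-1}$), which reduces it to a Beta-function / Euler reflection integral of the form $\int_0^\infty \frac{v^{2/\alpha-1}}{1+v}\,dv = \frac{\pi}{\sin(2\pi/\alpha)}$, valid since $\alpha>2$ ensures $0<2/\alpha<1$. Tracking the constants from the substitution yields $\int_0^\infty \frac{r\,dr}{1+(s\rho_d)^{-1}r^\alpha} = \frac{(s\rho_d)^{2/\alpha}}{\alpha}\cdot\frac{\pi}{\sin(2\pi/\alpha)} = \frac{\pi}{\alpha}(s\rho_d)^{2/\alpha}\csc\!\left(\frac{2\pi}{\alpha}\right)$. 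Substituting back and collecting the factor $2\pi$ from the angular integration gives the exponent $-\frac{2\pi^2}{\alpha}\mu\mathcal{P}_\delta\lambda(s\rho_d)^{2/\alpha}\csc\!\left(\frac{2}{\alpha}\pi\right)$, matching eq.~(\ref{Laplace I_d}) (with the paper's $p_d$ being $\rho_d$).

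The main obstacle is not conceptual but bookkeeping: getting the exponents of $s\rho_d$ and the factor $1/\alpha$ right through the change of variables, and invoking the convergence condition $\alpha>2$ at the correct place (the integral diverges at $r\to\infty$ if $\alpha\le 2$ and the Beta-integral representation fails). A secondary subtlety worth a remark is that removing the atom at $z_o$ from $\bm{\Phi}_\delta$ does not change the PGFL computation — by Slivnyak's theorem the reduced Palm distribution of a PPP is the PPP itself — so the typical-transmitter exclusion is harmless. One should also note that the noise term $\sigma^2$ and the residual self-interference term $\mathbbm{1}_\delta\beta\rho_d$ in eq.~(\ref{SINR formula}) are \emph{not} part of $\mathcal{I}_{d,\delta}$ and hence do not enter this lemma; they are handled separately (as deterministic shifts) when the SINR distribution is later assembled.
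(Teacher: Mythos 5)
Your proposal is correct and follows essentially the same route as the paper's proof: averaging over the exponential fading inside the product, applying the PGFL of the PPP with intensity $\mu\mathcal{P}_\delta\lambda$, and reducing the radial integral via $u=(s\rho_d)^{-1}r^{\alpha}$ to the standard form $\int_0^\infty u^{2/\alpha-1}(1+u)^{-1}du=\pi\csc(2\pi/\alpha)$. Your added remarks on Slivnyak's theorem for the excluded atom and on $\alpha>2$ as the convergence condition are sound clarifications of steps the paper leaves implicit.
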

\begin{proof}
	See Appendix \ref{proof for Laplace Transform}.
\end{proof}

\subsection{Outage Probability}
\label{Outage Subsection}
\begin{theorem}
	\label{Outage Theorem}
	For an inband-overlay $\delta$-D2D network described in section \ref{System Model Section}, the outage probability on cellular and D2D links denoted by $\mathbb{Q}_o$ can be calculated as
	\begin{align}
	\mathbb{Q}_o(\lambda,\theta_d,\alpha) = & \notag \mathbb{P}\left(\Upsilon^{{\delta}}  \le \theta_d \right) \\=&
	1- \mathcal{J} ({\theta _d},{\rho _d},\beta ,{\alpha}){\mathcal{L}_{\mathcal{I}_{d,\delta}}}(\frac{\theta_d \mathcal{R}_d^{\alpha}}{\rho_d}).
	\end{align}
		where, $\mathcal{J} ({\theta _d},{\rho _d},\beta ,{\alpha}) =\exp \left( {\frac{{ - {\theta _d}\mathcal{R}_d^{{\alpha}}({\sigma ^2} + {\mathbbm{1}_{\delta}}(\beta {\rho _d}))}}{{{\rho _d}}}} \right)$,

\end{theorem}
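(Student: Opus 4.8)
The plan is to exploit the fact that the desired-link fading $h$ is exponentially distributed with unit mean, so that the outage event $\{\Upsilon^{\delta}\le\theta_d\}$ becomes a tail event on $h$ that can be evaluated in closed form once we condition on everything else. First I would substitute $\rho_r=\rho_d h\,\mathcal{R}_d^{-\alpha}$ from (\ref{SINR formula}) and rearrange the inequality $\Upsilon^{\delta}\le\theta_d$ to isolate $h$, obtaining the equivalent event
\[
h \;\le\; \frac{\theta_d\,\mathcal{R}_d^{\alpha}}{\rho_d}\bigl(\sigma^2+\mathcal{I}_{d,\delta}+\mathbbm{1}_{\delta}(\beta\rho_d)\bigr).
\]
Since $\sigma^2\ge 0$, $\mathcal{I}_{d,\delta}\ge 0$ and $\mathbbm{1}_{\delta}(\beta\rho_d)\ge 0$, the right-hand side is nonnegative, so the exponential CCDF $\mathbb{P}(h\le x)=1-e^{-x}$ applies directly when we condition on the value of that right-hand side.

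Next I would condition on the interfering point process $\bm{\Phi}_{\delta}$ (equivalently on $\mathcal{I}_{d,\delta}$) and on the serving distance $\mathcal{R}_d$, and use the independence of $h$ from both to write
\[
\mathbb{P}\!\left(\Upsilon^{\delta}\le\theta_d\right)= 1-\mathbb{E}\!\left[\exp\!\left(-\frac{\theta_d\mathcal{R}_d^{\alpha}}{\rho_d}\bigl(\sigma^2+\mathcal{I}_{d,\delta}+\mathbbm{1}_{\delta}\beta\rho_d\bigr)\right)\right].
\]
The factor $\exp\!\bigl(-\theta_d\mathcal{R}_d^{\alpha}(\sigma^2+\mathbbm{1}_{\delta}\beta\rho_d)/\rho_d\bigr)$ is deterministic given $\mathcal{R}_d$ --- it is precisely $\mathcal{J}(\theta_d,\rho_d,\beta,\alpha)$ --- so it pulls out of the expectation, leaving $\mathbb{E}[\exp(-s\,\mathcal{I}_{d,\delta})]$ with $s=\theta_d\mathcal{R}_d^{\alpha}/\rho_d$, which is by definition the Laplace transform $\mathcal{L}_{\mathcal{I}_{d,\delta}}(s)$. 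Substituting the closed form of this transform from Lemma \ref{Laplace Transforms} then yields the claimed expression.

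The step I expect to require the most care is the conditioning argument: one must justify that $h$ is independent of both the aggregate interference $\mathcal{I}_{d,\delta}$ and the link distance $\mathcal{R}_d$, so that the exponential tail can be applied conditionally and the deterministic terms factored out. This is guaranteed by the modelling assumptions (i.i.d.\ unit-mean Rayleigh fading across all links, independent of the PPP $\bm{\Phi}_{\delta}$), but it is worth stating explicitly. Beyond that, no integration is needed at this stage, since the spatial averaging has already been absorbed into $\mathcal{L}_{\mathcal{I}_{d,\delta}}$ by the preceding lemma; everything else is routine algebraic rearrangement.
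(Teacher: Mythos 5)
Your proposal is correct and follows essentially the same route as the paper's proof: isolate $h$ in the event $\{\Upsilon^{\delta}\le\theta_d\}$, apply the unit-mean exponential tail of the Rayleigh fading gain conditionally, pull out the deterministic factor $\mathcal{J}(\theta_d,\rho_d,\beta,\alpha)$, and recognize the remaining expectation over $\mathcal{I}_{d,\delta}$ as the Laplace transform evaluated at $s=\theta_d\mathcal{R}_d^{\alpha}/\rho_d$. If anything, your explicit conditioning on $\mathcal{I}_{d,\delta}$ and statement of the independence assumptions is slightly more careful than the paper's appendix, which leaves that expectation implicit between its steps.
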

\begin{proof}
	See Appendix \ref{proof for Outage Theorem}.
\end{proof}

\subsection{Spectral Efficiency}
\label{Spectral Efficiency subsection}


\begin{theorem}
	\label{spectral efficiency theorem}
	In the $\delta$-D2D enabled cellular network, the spectral efficiency of a D2D link denoted by $\mathcal{S}_\delta$, can be calculated as
	\begin{align}
	\label{SE formula}
	\mathcal{S}_\delta = \frac{\kappa}{\ln(2)}\int_0^\infty {\frac{  e^{\frac{-\left(\sigma^2+\mathbbm{1}_{\delta}(\beta \rho_d)\right)\mathcal{R}_d^{\alpha}}{\rho_d}\tau}}{1+\tau}}  { {\mathcal{L}_{\mathcal{I}_{d,\delta }}}\left( {\frac{\tau \mathcal{R}_{d}^{\alpha}}{{{{\rho_d}}}}} \right)} d\tau.
	\end{align}
	where, 
$
\kappa  = \left\{ \begin{array}{l}
\begin{array}{*{20}{c}}
1&{;\delta  = HD}
\end{array}\\
\begin{array}{*{20}{c}}
2&{;\delta  = FD}
\end{array}
\end{array} \right.$
\end{theorem}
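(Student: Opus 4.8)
The plan is to start from the operational definition of per-link spectral efficiency as the ergodic Shannon rate carried by a D2D link, namely $\mathcal{S}_\delta = \kappa\,\mathbb{E}\!\left[\log_2\!\left(1+\Upsilon^{\delta}\right)\right]$, where the prefactor $\kappa$ captures the multiplexing gain of the link: in HD mode the channel is used in one direction per channel use ($\kappa=1$), while in FD mode transmission and reception occur simultaneously over the same time/frequency resource, so the same channel use carries two flows ($\kappa=2$). The expectation is over the Rayleigh fading coefficients and the interferer point process $\bm{\Phi}_\delta$, conditioned on the link length $\mathcal{R}_d$; this conditioning is exactly why $\mathcal{R}_d$ survives in the closed form.

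Next I would invoke the elementary identity, valid for any nonnegative random variable $X$, that $\mathbb{E}[\ln(1+X)] = \int_0^\infty \frac{\mathbb{P}(X>\tau)}{1+\tau}\,d\tau$. This follows by writing $\ln(1+X)=\int_0^X \frac{dt}{1+t}=\int_0^\infty \frac{\mathbbm{1}_{\{t<X\}}}{1+t}\,dt$ and exchanging expectation with integration via Tonelli's theorem (the integrand is nonnegative, so the interchange needs no further justification). Dividing by $\ln 2$ and multiplying by $\kappa$ gives $\mathcal{S}_\delta=\frac{\kappa}{\ln 2}\int_0^\infty \frac{\mathbb{P}(\Upsilon^{\delta}>\tau)}{1+\tau}\,d\tau$, so the problem reduces to inserting the complementary CDF of the SINR.

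That complementary CDF is precisely what Theorem \ref{Outage Theorem} delivers, with the threshold $\theta_d$ replaced by the integration variable $\tau$: $\mathbb{P}(\Upsilon^{\delta}>\tau)=1-\mathbb{Q}_o(\lambda,\tau,\alpha)=\mathcal{J}(\tau,\rho_d,\beta,\alpha)\,\mathcal{L}_{\mathcal{I}_{d,\delta}}\!\left(\frac{\tau\mathcal{R}_d^{\alpha}}{\rho_d}\right)$. Substituting the explicit form $\mathcal{J}(\tau,\rho_d,\beta,\alpha)=\exp\!\big(-\tau\mathcal{R}_d^{\alpha}(\sigma^2+\mathbbm{1}_{\delta}(\beta\rho_d))/\rho_d\big)$ and keeping $\mathcal{L}_{\mathcal{I}_{d,\delta}}$ in the factored form provided by Lemma \ref{Laplace Transforms} yields exactly (\ref{SE formula}); one could also expand the Laplace transform via (\ref{Laplace I_d}), but leaving it symbolic keeps the statement compact.

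The only genuine subtlety — and the step I would treat most carefully — is the legitimacy of swapping the outer expectation with the $\tau$-integral, together with the observation that Theorem \ref{Outage Theorem} has \emph{already} carried out the expectations over $h$ and over $\bm{\Phi}_\delta$ before we integrate in $\tau$; since every quantity in sight is nonnegative, Tonelli removes any integrability concern, and finiteness of $\mathcal{S}_\delta$ is guaranteed because $\mathcal{L}_{\mathcal{I}_{d,\delta}}(\cdot)\le 1$ while the exponential factor (nontrivial whenever $\sigma^2>0$ or $\mathbbm{1}_{\delta}\beta\rho_d>0$) forces the integrand to decay, so the integral converges.
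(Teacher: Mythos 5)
Your proposal is correct and takes essentially the same route as the paper: the paper also starts from $\mathcal{S}_\delta=\kappa\,\mathbb{E}\!\left[\log_2(1+\Upsilon^{\delta})\right]$, writes it as the tail integral $\kappa\int_0^\infty \mathbb{P}\!\left(\Upsilon^{\delta}\ge 2^t-1\right)dt$, evaluates the CCDF via the Rayleigh-fading exponential and the Laplace transform of $\mathcal{I}_{d,\delta}$, and finishes with the substitution $\tau=2^t-1$, which is exactly your weighted-CCDF identity $\mathbb{E}[\ln(1+X)]=\int_0^\infty \frac{\mathbb{P}(X>\tau)}{1+\tau}d\tau$ applied in a different order. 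The only cosmetic difference is that you invoke Theorem~\ref{Outage Theorem} for $\mathbb{P}(\Upsilon^{\delta}>\tau)$ instead of rederiving that CCDF inline, which is a harmless (indeed cleaner) modularization.
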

\begin{proof}
	See Appendix \ref{proof for SE theorem}.
\end{proof}

\subsection{Special Cases}
\subsubsection{$\alpha=4$}
\begin{corollary}
	\label{Special Case by alpha on Laplace Transforms}
	When $\alpha=4$, the Laplace transform can be simplified as
	\begin{equation}
	\label{Special Case by alpha}
	{\mathcal{L}_{\mathcal{I}_{d,\delta}}}(s) =  \exp \left( {\frac{- {\pi ^2}}{2 }}{ \mu{\mathcal{P}_\delta}\lambda {\sqrt{s{p_d}}}} \right).
	\end{equation}
\end{corollary}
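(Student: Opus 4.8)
The plan is to derive this corollary as an immediate specialization of Lemma~\ref{Laplace Transforms}, which already gives the general Laplace transform
$
{\mathcal{L}_{\mathcal{I}_{d,\delta}}}(s) = \exp\!\left( \frac{-2\pi^2}{\alpha}\, \mu \mathcal{P}_\delta \lambda\, (s p_d)^{2/\alpha}\csc(\tfrac{2}{\alpha}\pi) \right)
$
for any $\alpha>2$. Since no new stochastic-geometry argument is needed, the whole proof is a substitution of $\alpha=4$ followed by elementary simplification of the three $\alpha$-dependent factors.

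First I would simplify the prefactor: $\frac{2\pi^2}{\alpha}\big|_{\alpha=4} = \frac{\pi^2}{2}$. Next I would simplify the power of $(s p_d)$: the exponent is $\frac{2}{\alpha} = \frac{1}{2}$, so $(s p_d)^{2/\alpha} = \sqrt{s p_d}$. Finally I would evaluate the cosecant factor: $\csc(\tfrac{2}{\alpha}\pi)\big|_{\alpha=4} = \csc(\tfrac{\pi}{2}) = \frac{1}{\sin(\pi/2)} = 1$, which is exactly the term that makes the expression collapse to a clean closed form (as opposed to, say, $\alpha=3$ where the cosecant is irrational). Substituting these three evaluations into the exponent yields $-\frac{\pi^2}{2}\,\mu\mathcal{P}_\delta\lambda\sqrt{s p_d}$, i.e.\ exactly \eqref{Special Case by alpha}.

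There is essentially no obstacle here; the only thing to be careful about is consistency of notation, namely that the $p_d$ appearing inside the Laplace transform in Lemma~\ref{Laplace Transforms} is the same D2D transmit power denoted $\rho_d$ elsewhere, so no hidden constants are dropped. Hence the corollary follows directly, and one could even state it as a one-line consequence of Lemma~\ref{Laplace Transforms} rather than as a separately proved result.
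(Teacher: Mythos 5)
Your proposal is correct and follows exactly the paper's own route: substituting $\alpha=4$ into the general Laplace transform of Lemma~\ref{Laplace Transforms} and simplifying the prefactor, the exponent, and the $\csc(\pi/2)=1$ factor. The paper's proof is the same one-line substitution, so no further comment is needed.
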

\begin{proof}
	By substituting $\alpha=4$ in eq. (\ref{Laplace I_d}), we can the expression in eq. (\ref{Special Case by alpha}).
\end{proof}

	
The interference power dominates background noise. Hence, background noise can often be neglected, i.e., $\sigma^2=0$. 
\subsubsection{Interference limited regime, ($\alpha=4$) and Perfectly Canceled SI}
\begin{corollary}
	If $\sigma^2=0$, $\alpha=4$ and $\beta=0$, then the spectral efficiency in eq. (\ref{SE formula}) can be directly solved, and the final expression will be 
	\begin{equation}
	\mathcal{S}_\delta=\frac{\kappa}{\ln(2)} \left[ \pi-2Si(T)  \right]\sin(T)-2ci(T)\cos(T),
	\end{equation}
where, $T= -\frac{\pi^2}{2} \mu\mathcal{P}_\delta \lambda \mathcal{R}_d^2$, and $Si(x)=\int_{0}^{x}\frac{\sin(t)}{t}dt$, $ci(x) = - \int_{x}^{\infty} \frac{\cos(t)}{t} dt$, respectively, are standard Sine and Cosine special integral functions, respectively.
\end{corollary}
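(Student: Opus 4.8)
The plan is to take the spectral-efficiency integral of Theorem~\ref{spectral efficiency theorem} and collapse it under the three hypotheses one at a time. First, putting $\sigma^{2}=0$ and $\beta=0$ makes the argument of the leading exponential in the integrand of eq.~(\ref{SE formula}) vanish (since then $\sigma^{2}+\mathbbm{1}_{\delta}(\beta\rho_{d})=0$), so the integrand reduces to $\frac{1}{1+\tau}\,\mathcal{L}_{\mathcal{I}_{d,\delta}}\big(\tau\mathcal{R}_{d}^{\alpha}/\rho_{d}\big)$. Next, setting $\alpha=4$ lets me substitute the closed form of Corollary~\ref{Special Case by alpha on Laplace Transforms}: evaluated at $s=\tau\mathcal{R}_{d}^{4}/\rho_{d}$ it becomes $\exp\big(-\tfrac{\pi^{2}}{2}\mu\mathcal{P}_{\delta}\lambda\,\mathcal{R}_{d}^{2}\sqrt{\tau}\big)=e^{T\sqrt{\tau}}$ with $T=-\tfrac{\pi^{2}}{2}\mu\mathcal{P}_{\delta}\lambda\mathcal{R}_{d}^{2}$, which is exactly the constant named in the statement (and note $T<0$). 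Hence everything has been reduced to the scalar integral $\mathcal{S}_{\delta}=\frac{\kappa}{\ln(2)}\int_{0}^{\infty}\frac{e^{T\sqrt{\tau}}}{1+\tau}\,d\tau$.

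The core of the proof is then the evaluation of this integral. I would substitute $u=\sqrt{\tau}$ (so $d\tau=2u\,du$) to bring it to $2\int_{0}^{\infty}\frac{u\,e^{Tu}}{1+u^{2}}\,du$, a standard Laplace-type integral; writing $p=-T>0$ it is tabulated as $\int_{0}^{\infty}\frac{u\,e^{-pu}}{1+u^{2}}\,du=-ci(p)\cos p+\big(\tfrac{\pi}{2}-Si(p)\big)\sin p$. If a self-contained derivation is preferred, the same identity follows from the partial fraction $\frac{2u}{1+u^{2}}=\frac{1}{u-\jmath}+\frac{1}{u+\jmath}$ ($\jmath$ the imaginary unit), which writes the integral as a conjugate pair of shifted exponential integrals $e^{\pm\jmath p}E_{1}(\pm\jmath p)$; invoking the classical relations $E_{1}(\pm\jmath p)=-ci(p)\mp\jmath\big(\tfrac{\pi}{2}-Si(p)\big)$, the imaginary parts cancel and the real part collects into precisely the $ci$--$Si$ combination above. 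Multiplying back by $\kappa/\ln(2)$ and grouping the $\sin$ and $\cos$ terms yields the stated closed-form expression.

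The main obstacle is the sign bookkeeping, because the physically relevant constant $T$ is negative whereas the tabulated integral and the functions $Si$ and $ci$ are most naturally stated for a positive argument. One has to use the oddness $Si(-x)=-Si(x)$, the even-part extension of $ci$ (the defining integral $-\int_{x}^{\infty}\cos t\,/t\,dt$ is not directly convergent at the origin for $x<0$), together with $\sin(-x)=-\sin(x)$ and $\cos(-x)=\cos(x)$, consistently, so that the final answer can be displayed purely in terms of $T$ rather than $|T|$ as in the statement. Once those conventions are fixed, the remainder is just the routine substitution and the table lookup (or the short partial-fraction computation) outlined above.
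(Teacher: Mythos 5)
Your derivation is correct and reaches the paper's closed form, but by a slightly different reduction. The paper's proof is a one-line lookup: after the same specialization ($\sigma^2=0$, $\beta=0$, $\alpha=4$) it rescales the integration variable so that the integral takes the form $\int_{0}^{\infty}\frac{e^{-\sqrt{u}}}{u+\psi}\,du$ with $\psi=\left(\frac{\pi^2}{2}\mu\mathcal{P}_\delta\lambda\mathcal{R}_d^2\right)^{2}$ and cites the table identity $\int_{0}^{\infty} \frac{e^{-\sqrt{u}}\,du}{u+\psi}=\left[\pi-2Si(\sqrt{\psi})\right]\sin(\sqrt{\psi})-2ci(\sqrt{\psi})\cos(\sqrt{\psi})$ (the rescaling step is left implicit). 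You instead substitute $u=\sqrt{\tau}$ to obtain $2\int_{0}^{\infty}\frac{u\,e^{-pu}}{1+u^{2}}\,du$ with $p=-T>0$ and use the equivalent Laplace-type table entry, additionally supplying a self-contained derivation via partial fractions and the exponential-integral relations $E_{1}(\pm \jmath p)=-ci(p)\mp \jmath\bigl(\tfrac{\pi}{2}-Si(p)\bigr)$; the two table identities are related by exactly your change of variable, so the proofs are mathematically the same computation packaged differently, with yours being more verifiable. Your sign discussion is also well taken: the tabulated result holds for a positive argument $\sqrt{\psi}=|T|$, so as literally written the corollary (with $T<0$ inserted into $Si$, $ci$, $\sin$, $\cos$) is not sign-consistent and $ci$ is not even defined by its integral there --- the constant should be $T=+\frac{\pi^{2}}{2}\mu\mathcal{P}_\delta\lambda\mathcal{R}_d^{2}$ (equivalently $|T|$), a point the paper's one-line proof glosses over; note also that your final answer cannot in fact be rewritten "purely in terms of $T$" with $T$ negative without redefining $ci$, so the honest fix is to state the result at $|T|$.
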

\begin{proof}
	By directly using the formula \cite{IntegralBook} $\int_{0}^{\infty} \frac{e^{-\sqrt{u}}du}{u+\psi}=\left[\pi-2Si(\sqrt{\psi})\right]\sin(\sqrt{\psi})-2ci(\sqrt{\psi})\cos(\sqrt{\psi})$, we get the final expression. 
\end{proof}
\section{Results and Discussions}
\label{Results and Discussions Section}
Fig. \ref{Probe versus lambda zipf Fig} demonstrates collaboration probabilities versus $\lambda$ for HD and FD modes, for different Zipf exponents. For lower densities, users' demands mostly can be captured via HD collaboration. However, When the density increases, FD collaboration outperforms HD one, because, for any arbitrary node, the probability of finding users' demand within vicinity increases. In other words, due to increasing network, cache hit probability increases which is directly relates to successful cache hits. Zipf exponent in Fig. \ref{Probe versus lambda zipf Fig} plays crucial role, so that the high redundancy in the contents popularity, i.e., most popular contents capture majority of the interests, leads to multiple requests to the same content. By increasing network density, due to the increasing the number of interfering transmitters, for a given D2D link SINR threshold ($\theta_d$), the outage probability for both HD and FD modes increases (Fig. \ref{Outage vs lambda mu Fig}). The impact of the parameter $\mu$ is clear since it accounts for the density of demanding users which leads to increasing transmitters and consequently increasing outage probability in both HD and FD modes. The main difference between HD and FD mode is because of SI cancellation factor $\beta$ which have major impact on FD performance. We demonstrate the impact of the parameter $\beta$ in Fig. \ref{Outage vs theta beta Fig}. Typical values for efficient digital and analog SI cancellation in FD radios reported by \cite{full-duplex-advances1} which is around $10^{-5}$ and $10^{-6}$, i.e., $-50 \textup{ dB}$ and $-60\textup{ dB}$. AS can be seen from Fig. \ref{Outage vs theta beta Fig}, better SI cancellation factor (lower $\beta$) can increase FD radios performance and consequently decrease outage probability of FD mode. Outage probability's behavior for the system threshold $\theta_d$ is clear since increasing $\theta_d$ means guaranteeing D2D link quality for a successful decoding and demodulation at receivers. The impact of parameter $\mu$ in outage probability by considering different system SINR thresholds in Fig. \ref{Outage vs theta mu Fig} is similar to Fig. \ref{Outage vs lambda mu Fig}. Fig. The spectral efficiency of the D2D link is demonstrated in Fig. \ref{SE versus lambda beta Fig} for different values of parameter $\beta$. As can be seen from this figure, FD mode's performance converges to double against its HD counterpart, i.e., better SI cancellation factor (lower $\beta$) can theoretically double the specral efficiency of an FD link. In higher densities, due to higher interference, spectral efficiency for both HD and FD modes decreases.

\begin{figure}[h]
	\centering
	\includegraphics[width=0.5 \textwidth]{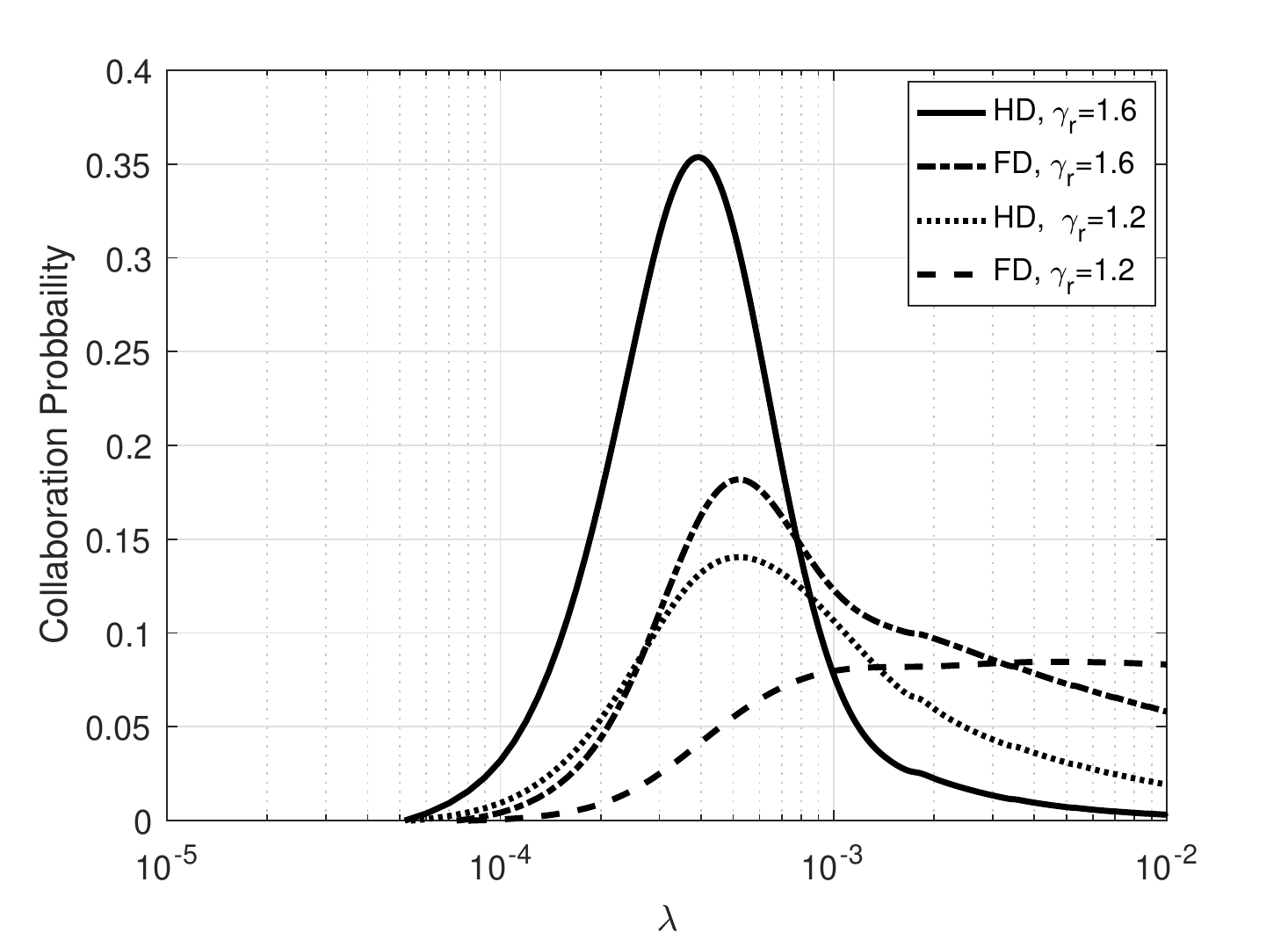}
	\caption{Collaboration Probability versus $\lambda$ for $\mu=0.3$.}
	\label{Probe versus lambda zipf Fig}
\end{figure}
\begin{figure}[h]
	\centering
	\includegraphics[width=0.5 \textwidth]{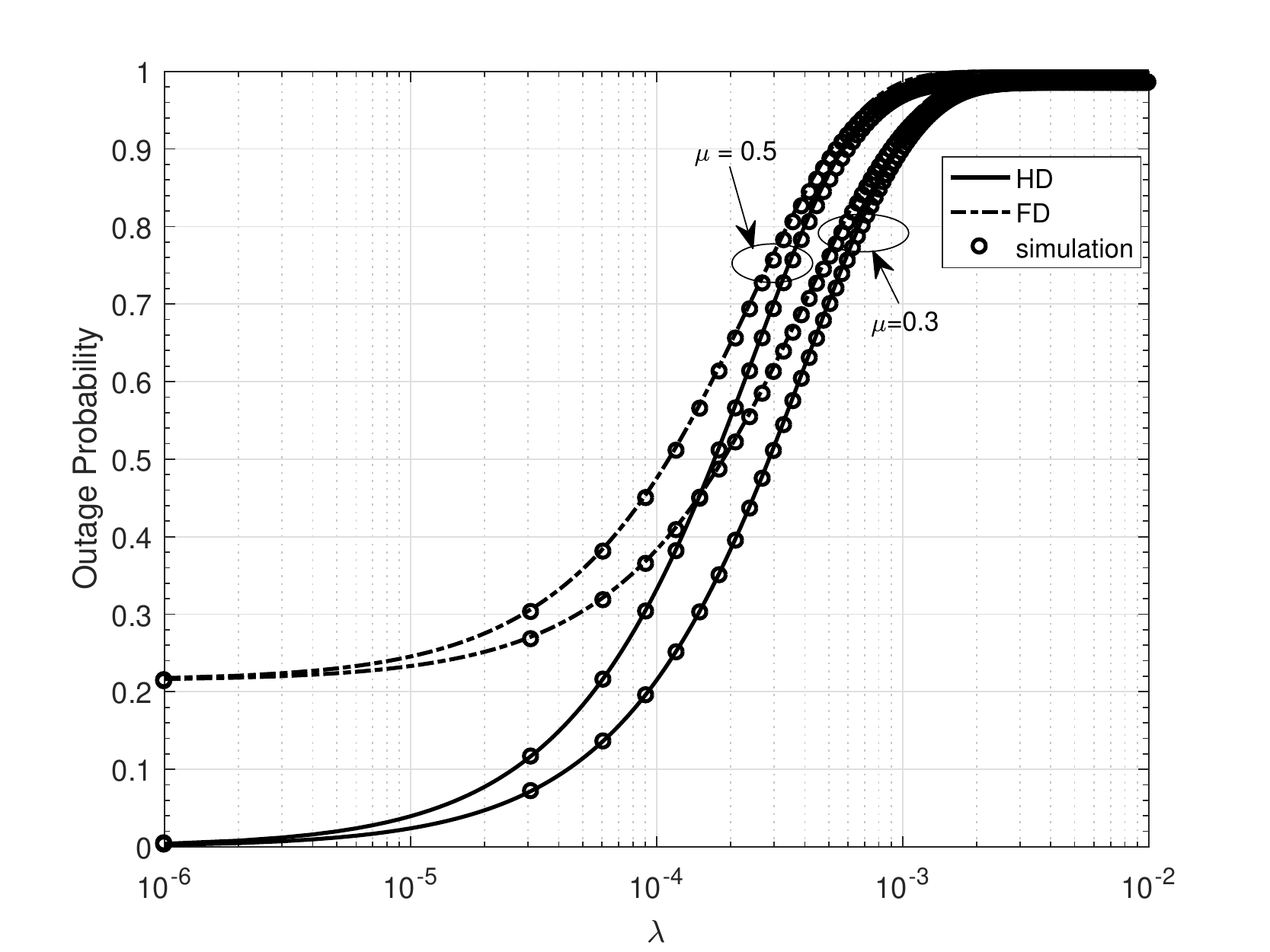}
	\caption{Outage probability versus $\lambda$ for $\gamma_r=1.2$, $\beta=10^{-5}$, $\theta_d=10 \textup{ dBm}$.}
	\label{Outage vs lambda mu Fig}
\end{figure}
\begin{figure}[h]
	\centering
	\includegraphics[width=0.5 \textwidth]{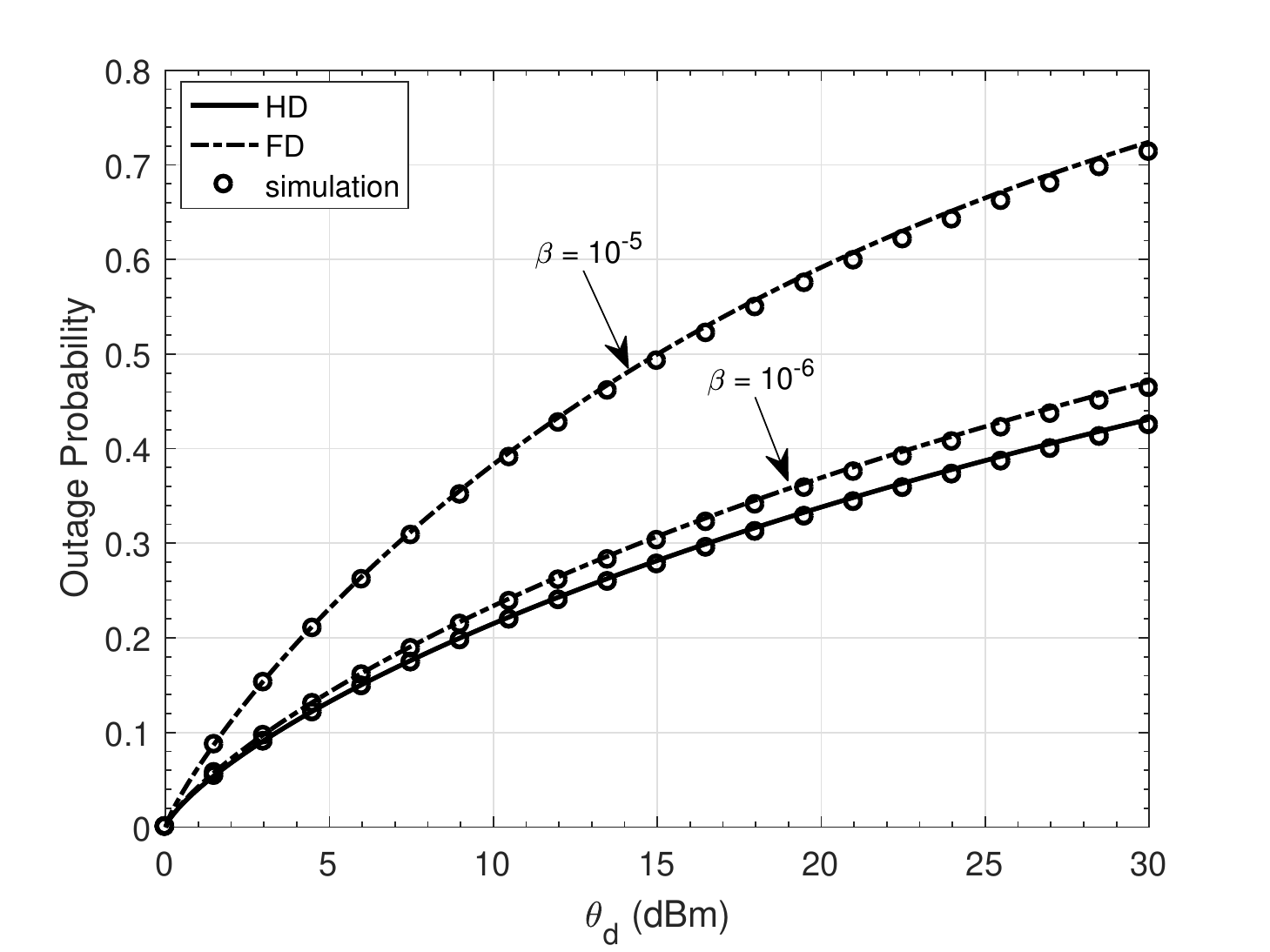}
	\caption{Outage probability versus $\theta_d$ for $\mu=0.3$, $\gamma_r=1.2$, $\lambda=10^{-3}$.}
	\label{Outage vs theta beta Fig}
\end{figure}
\begin{figure}[h]
	\centering
	\includegraphics[width=0.5 \textwidth]{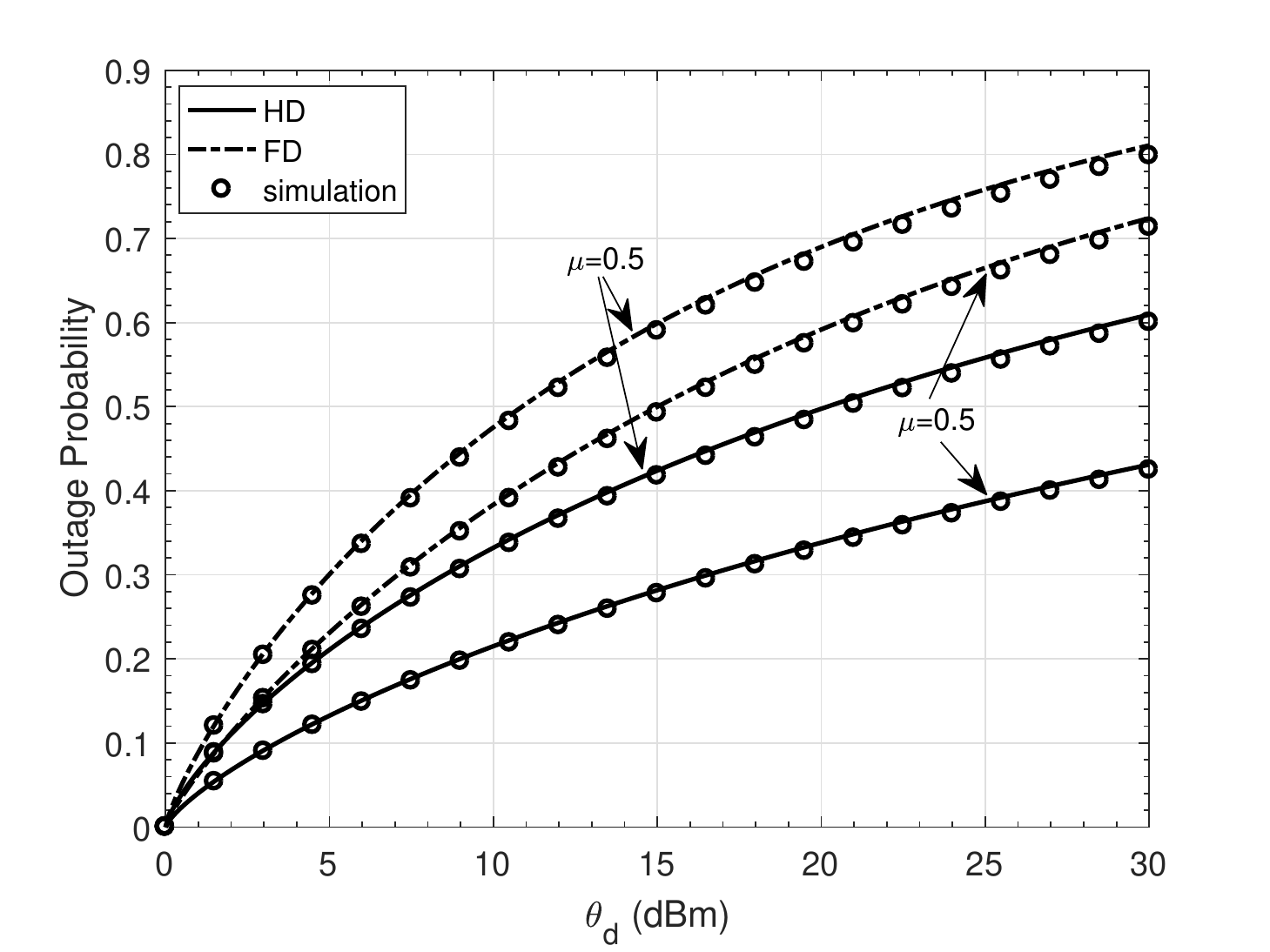}
	\caption{Outage probability versus $\theta_d$ for $\gamma_r=1.2$, $\beta=10^{-5}$, and $\lambda=10^{-3}$.}
	\label{Outage vs theta mu Fig}
\end{figure}
\begin{figure}[h]
	\centering
	\includegraphics[width=0.5 \textwidth]{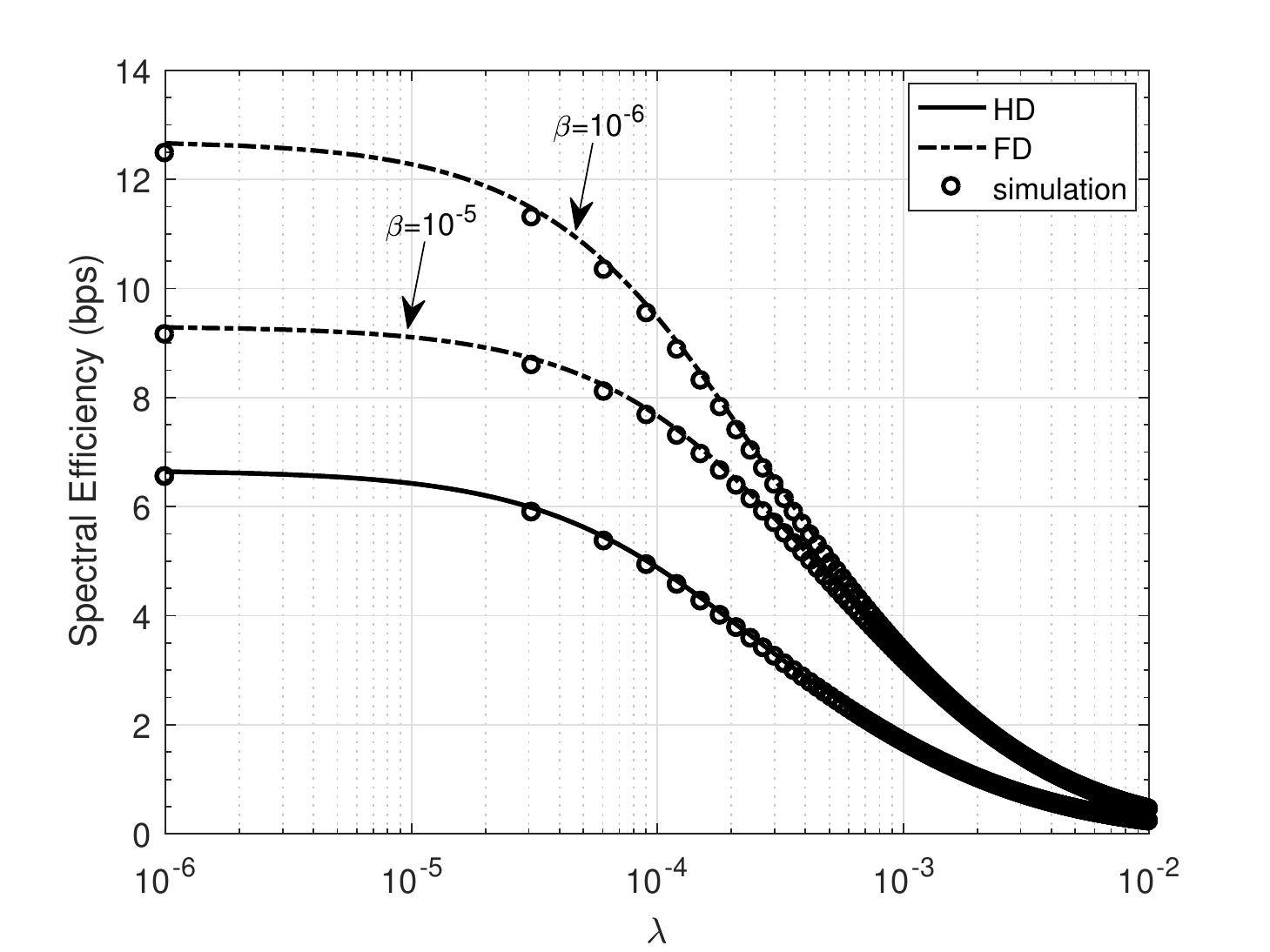}
	\caption{Outage probability versus $\lambda$ for $\mu=0.3$, $\gamma_r=1.2$.}
	\label{SE versus lambda beta Fig}
\end{figure}
\section{Conclusion}
\label{Conclusion Section}
In this paper, we analyzed the performance of the FD-enabled D2D network for the wireless video distribution. While FD radios can leads to more satisfaction in users' demand, analytic and simulation results confirms that there is a tradeoff between outage probability and the spectral efficiency; in which that FD radios suffers from the SI and hence leads to higher outage probability, while it can potentially double the spectral efficiency if a better SI cancellation factor is employed. 

\appendices
\section{Proof of Proposition \ref{optimal caching proposition}}

\label{proof for optimal caching proposition}

We define $\mathcal{P}_{\delta|i}$ which is conditioned on cached contents on user $u_i$, as the probability that user $u_i$ can potentially operate in $\delta$ mode. We also define $\mathcal{Q}_{\delta}(i)$ as the probability that $u_i$'s request cannot be served ($\delta=\textup{HD}$) or can be served ($\delta=\textup{FD}$), and $\Lambda(i)$ as the probability that $u_i$ can serve for at least one user's demand. Since the requests of contents are identically and independent distributed (i.i.d) at each user, we can say
\begin{equation}
\label{P delta conditioned on i}
\mathcal{P}_{\delta|i}=\mathcal{Q}_{\delta}(i)\Lambda(i).
\end{equation}
From the explanations in section \ref{System Model Section}, we can observe that 
\begin{equation}
\label{Q HD (i)}
\mathcal{Q}_{\textup{HD}}(i)=1-\left( f_{\textup{hit}}-f(i)\right),
\end{equation}
\begin{equation}
\label{Q FD (i)}
\mathcal{Q}_{\textup{FD}}(i)= f_{\textup{hit}}-f(i).
\end{equation}
Now, given $N$ users and assuming user $u_i$ caches contents with popularity $f(i)$ as in eq. (\ref{popularity opt u_i}), and excluding user $u_i$ from $N$ users, the number of requests for cached contents at user $u_i$ is a random variable denoted by $n$ and follows binomial distribution with parameters $N-1$ and $f(i)$, i.e., $n\sim \mathcal{B}(N-1,f(i))$. Since, $u_i$ can serve for multiple requests at the same time, it can be easily shown that 
\begin{equation}
\label{Lambda (i)}
\Lambda(i) = \sum\limits_{n= 1}^{N - 1}{\left( {\begin{array}{*{20}{c}}
		{N - 1}\\
		n
		\end{array}} \right)}{\left( {f(i)} \right)^n}{\left( {1 - f(i)} \right)^{N - n - 1}}.
\end{equation}
By substituting eqs. (\ref{Q HD (i)}, \ref{Q FD (i)} and \ref{Lambda (i)}) in eq. (\ref{P delta conditioned on i}), we obtain final expression for $\mathcal{P}_{\delta|i}$ in eq. (\ref{P delta conditioned on i}). Given $N$ users, the probability of that an arbitrary user among $N$ users, can operate in $\delta$-D2D mode, denoted by $\mathcal{P}_{\delta|N}$, can be obtained by taking expectation over all possible values for $i$. Therefore we have $\mathcal{P}_{\delta|N}=\sum_{i=1}^{N} \mathcal{P}_{\delta|i}f(i)$, which completes the proof. 

\section{Proof of Lemma \ref{Laplace Transforms}}
\label{proof for Laplace Transform}
\begin{align}
{\mathcal{L}_{\mathcal{I}_{d, \delta}}}(s) = & \notag {\mathbb{E}_{\mathcal{I}_{d, \delta}}}[{e^{-s\mathcal{I}_{d, \delta}}}] \\=& \notag {\mathbb{E}_{\mathcal{I}_{d, \delta}}}\left[ {\exp \left( {\bm{\sum}\limits_{{z_i} \in {\Phi _\delta}} {{-s\rho _d}{h_i}{{\left\| {{z_i}} \right\|}^{ - {\alpha}}}} } \right)} \right] \\= & \notag {\mathbb{E}_{{z_i},{h_i}}}\left[ {\bm{\prod}\limits_{{z_i} \in {\Phi _\delta}} {\exp \left( {{-s\rho _d}{h_i}{{\left\| {{z_i}} \right\|}^{ - {\alpha}}}} \right)} } \right]\\=& \notag {\mathbb{E}_{{\Phi _\delta}}}\left[ {\bm{\prod}\limits_{{z_i} \in {\Phi _\delta}} {{\mathbb{E}_{{h_i}}}\left[ {\exp \left( { - s{\rho _d}{h_i}{{\left\| {{z_i}} \right\|}^{ - {\alpha }}}} \right)} \right]} } \right] \\\mathop  = \limits^{(a)}& \notag
{\mathbb{E}_{{\Phi _\delta}}}\left[ {\bm{\prod}\limits_{{z_i} \in {\Phi _\delta}} {\left( {\int_0^\infty  {{e^{ - s{\rho _d}{h_i}{{\left\| {{z_i}} \right\|}^{ - {\alpha}}}}}} {e^{{-h_i}}}d{h_i}} \right)} } \right] \\=& \notag
{\mathbb{E}_{{\Phi _\delta}}}\left[ {\bm{\prod}\limits_{{z_i} \in {\Phi _\delta}} {\frac{1}{{1 + s{\rho _d}{{\left\| {{z_i}} \right\|}^{ - {\alpha}}}}}} } \right] \\\mathop  = \limits^{(b)}& \notag
\exp \left( { - {\mu\mathcal{P}_\delta\lambda}\int_0^\infty  {\int_0^{2\pi } {\left[ {1 - \frac{1}{{1 + s{\rho _d}{r^{ - {\alpha}}}}}} \right]rdrd\phi } } } \right) \\\mathop  = \limits^{(c)}& \notag 
\exp \left( { - 2\pi \mu\mathcal{P}_\delta\lambda \int_0^\infty  {\left( {\frac{1}{{1 + {{(s{\rho _d})}^{ - 1}}{r^{{\alpha}}}}}} \right)rdr} } \right) \\\mathop  = \limits^{(d)}& \notag
\exp \left( {\frac{{ - 2\pi \mu\mathcal{P}_\delta\lambda {{(s{\rho_d})}^{\frac{2}{\alpha }}}}}{\alpha}\underbrace {\int_0^\infty  {\frac{{{u^{\frac{2}{\alpha} - 1}}du}}{{1 + u}}} }_{\Theta}} \right) \\\mathop  = \limits^{(e)}&
\exp \left( {\frac{- 2{\pi ^2}}{\alpha}{ \mu\mathcal{P}_\delta\lambda {{(s{\rho_d})}^{\frac{2}{\alpha }}}\csc(\frac{2}{\alpha}\pi )}} \right),
\end{align}
where (a) follows Rayleigh fading channel model which is exponential distribution with unit mean, i.e., $h \sim \exp (1)$, ($b$) follows using probability generating functional (PGFL) of a PPP and ($c$) follows the density of process $\Phi_\delta$ which is defined in section \ref{System Model Section}, $(d)$ follows substituting variable $u=(s\rho_d)^{-1}r^{\alpha}$, and $(e)$ is direct solution for integral $\Theta$ from the table \cite[3.241 eq. (2)]{IntegralBook} , i.e., $\int_0^\infty  {\frac{{{u^{\psi-1}}du}}{{1 + u^{\zeta}}}}  = \frac{\pi}{\zeta} \csc\left( {\frac{\psi \pi}{{{\zeta}}}} \right)$ Re$\{\zeta\}>$ Re$\{\psi\}>0$.

\section{Proof of Theorem \ref{Outage Theorem}}
\label{proof for Outage Theorem}
\begin{align}
\mathbb{P}\left(\Upsilon^{{\delta}} \le \theta_d \right)=  \notag &
\mathbb{P}\left(\frac{{\rho_r }}{{{\sigma ^2} + \mathcal{I}_{d,\delta }+\mathbbm{1}_\delta(\beta\rho_d)}}  \le \theta_d \right) \\ = &  \notag
\mathbb{P}\left( {{\rho _d}h\mathcal{R}_d^{  {\alpha}} \le {\theta _d}\left( {{\sigma ^2} + \mathcal{I}_{d,\delta }+\mathbbm{1}_\delta(\beta\rho_d)} \right)} \right) \\ = & \notag
\mathbb{P}\left( {h \le \frac{{{\theta _d}\mathcal{R}_d^{  {\alpha}}}}{{{\rho _d}}}\left( {{\sigma ^2} + \mathcal{I}_{d,\delta } +\mathbbm{1}_\delta(\beta\rho_d)} \right)} \right) \\\mathop  = \limits^{(a)}& \notag
1 - {e^{\frac{{ - {\theta _d}\mathcal{R}_d^{  {\alpha}}}}{{{\rho _d}}}\left( {{\sigma ^2} + \mathcal{I}_{d,\delta }+\mathbbm{1}_\delta(\beta\rho_d)} \right)}}\\=&  \notag
1 - {e^{\frac{{ - {\theta _d}\mathcal{R}_d^{  {\alpha}}\left({\sigma ^2}+\mathbbm{1}_\delta(\beta\rho_d)\right)
		}}{{{\rho _d}}}}} 
{\mathcal{L}_{\mathcal{I}_{d,\delta }}}\left( {\frac{{  {\theta _d}\mathcal{R}_d^{  {\alpha}}}}{{{\rho _d}}}} \right),
\end{align}
where (a) follows Rayleigh fading channel model which is exponential distribution with unit mean, i.e., $h \sim \exp (1)$.

\section{Proof of Theorem \ref{spectral efficiency theorem}}
\label{proof for SE theorem}
It is clear that inband FD mode provides two simultaneously links at the same time/frequency, while HD mode provides single link between a D2D pair. We assume that channels in both links of the FD mode are reciprocal, hence we denote $\kappa$ as the number of links for each mode as in eq. (\ref{SE formula}). Now, according to the Shannon-Hartley theorem, we have
\begin{align}
\mathcal{S}_\delta = & \notag
\kappa\mathbb{E} \left[ \log_2 \left( 1+ \Upsilon^{\delta} \right) \right] \\=& \notag
\kappa\int_{0}^{\infty} \mathbb{P}\left( \log_2 \left( 1+ \Upsilon^{\delta}\right) \ge t \right)dt \\=& \notag
\kappa\int_{0}^{\infty} \mathbb{P}\left( \Upsilon^{\delta} \ge 2^t -1 \right)dt \\=& \notag
\kappa\int_{0}^{\infty} \mathbb{P}\left( {\frac{{\rho _r}}{{{\sigma ^2} +  \mathcal{I}_{d,\delta }  + {\mathbbm{1}_{\delta}}(\beta {\rho _d})}}} \ge {2^t -1} \right)dt \\\mathop  = \limits^{(a)}& \notag
\kappa\int_{0}^{\infty} \mathbb{P}\left( h \ge  \frac{2^t -1}{\rho_d\mathcal{R}_d^{-\alpha}} \left(  {{{\sigma ^2} + \mathcal{I}_{d,\delta}  + {\mathbbm{1}_{\delta}}(\beta {\rho _d})}} \right) \right)dt \\\mathop  = \limits^{(b)}& \notag
\kappa\int_{0}^{\infty} e^{-\frac{(2^t-1)(\sigma^2+\mathbbm{1}_\delta (\beta\rho_d) )}{\rho_d\mathcal{R}_d^{-\alpha}}} e^{-\frac{2^t-1}{\rho_d\mathcal{R}_d^{-\alpha}}\left( \mathcal{I}_{d,\delta}\right)}dt \\=& \notag 
\kappa\int_{0}^{\infty} e^{-\frac{(2^t-1)(\sigma^2+\mathbbm{1}_\delta (\beta\rho_d) )}{\rho_d \mathcal{R}_d^{-\alpha}}} \mathcal{L}_{\mathcal{I}_{d,\delta}}\left( \frac{2^t-1}{\rho_d \mathcal{R}_d^{-\alpha}} \right)dt \\\mathop  = \limits^{(c)}&
\frac{\kappa}{\ln(2)}\int_0^\infty {\frac{  e^{\frac{-\left(\sigma^2+ \mathbbm{1}_{\delta}(\beta \rho_d)\right)\mathcal{R}_d^{\alpha}}{\rho_d}\tau}}{1+\tau}}  {{\mathcal{L}_{\mathcal{I}_{d,\delta }}}\left( {\frac{\tau \mathcal{R}_d^{\alpha}}{{{\rho_d}}}} \right)} d\tau,
\end{align}
where $(a)$ follows the substitution $\rho_r=\rho_d \mathcal{R}_d^{-\alpha}$ with simple manipulations, $(b)$ follows Rayleigh fading channel model which is modeled with the random variable $h$ with the exponential distribution of unit mean, and finally by using the substitution $(2^t-1) \to \tau$ in ($c$), and simple manipulations, we get the final results which completes the proof.

\bibliographystyle{IEEEtran}


\end{document}